\documentclass[runningheads]{llncs}
%

\usepackage{hyperref}
\usepackage{listings}
\usepackage{enumitem}
\usepackage{mathtools}
\usepackage{amsmath,amssymb,amsfonts}
\usepackage{mathrsfs,mathabx,verbatim,fullpage,psfrag,graphicx,color}

\usepackage{amsthm}
\usepackage{cleveref}
\usepackage{algorithmicx}
\usepackage{fixltx2e}
\usepackage[noend]{algpseudocode}
\MakeRobust{\Call}

\newcommand{\icode}[1]{\texttt{\small #1}}

\newcommand{\coll}{\mathsf{coll}}
\newcommand{\cycl}{\mathsf{cycl}}

\newcommand{\distinct}{\mathrm{distinct}}
\newcommand{\permutation}{\mathrm{perm}}

\newcommand{\eg}{{\it e.g.} }
\newcommand{\ie}{{\it i.e.} }

\bibliographystyle{splncs04}

\title{$k$-Equivalence Relations and Associated Algorithms}
\author{Daniel Selsam \and Jesse Michael Han}
\authorrunning{Selsam and Han}

\institute{Microsoft Research \email{daselsam@microsoft.com} \\ University of Pittsburgh \email{jmh288@pitt.edu} }

\begin{document}
\maketitle

\begin{abstract}
Lines and circles pose significant scalability challenges in synthetic geometry.
A line with $n$ points implies ${n \choose 3}$ collinearity atoms \icode{coll},
or alternatively, when lines are represented as functions, equality among ${n \choose 2}$ different lines.
Similarly, a circle with $n$ points implies ${n \choose 4}$ cocyclicity atoms \icode{cycl},
or equality among ${n \choose 3}$ circumcircles.
We introduce a new mathematical concept of
\emph{$k$-equivalence relations}, which generalizes equality ($k=1$) and includes both lines ($k=2$) and circles ($k=3$),
and present an efficient proof-producing procedure to compute the closure of a $k$-equivalence relation.
\end{abstract}

\section{Introduction}

Lines and circles pose significant scalability challenges in synthetic geometry.
A line with $n$ points implies ${n \choose 3}$ collinearity atoms \icode{coll},
or alternatively, when lines are represented as functions, equality among ${n \choose 2}$ different lines.
Similarly, a circle with $n$ points implies ${n \choose 4}$ cocyclicity atoms \icode{cycl},
or equality among ${n \choose 3}$ circumcircles.
Although geometry problem statements may not contain any lines or circles with more than a few points on them,
synthetic provers need to introduce auxiliary points, and doing so may result in lines and circles with
dozens if not hundreds of points on them.
To support efficient reasoning in the presence of a large number of auxiliary points, we introduce a new mathematical concept of
\emph{$k$-equivalence relations}, which generalizes equality ($k=1$) to include both lines ($k=2$) and circles ($k=3$),
and present an efficient, proof-producing procedure to compute the closure of a $k$-equivalence relation that
uses exponentially less space (in $k$) than the na\"{i}ve procedure.

\section{$k$-Equivalence Relations}

A binary relation $R$ is an \emph{equivalence} relation provided it satisfies the following three laws:

\begin{center}
\begin{tabular}{|l|l|}
  \hline
  \text{Reflexivity}  & $\forall a . R(a, a)$ \\
  \hline
  \text{Symmetry}     & $\forall a b . R(a, b) \implies R(b, a)$ \\
  \hline
  \text{Transitivity} & $\forall a b c . R(a, b) \wedge R(a, c) \implies R(b, c)$ \\
  \hline
\end{tabular}
\end{center}

In synthetic geometry, the ternary relation $\coll$ representing collinearity
satisfies similar laws:

\begin{center}
\begin{tabular}{|l|l|}
  \hline
  \text{Sub-reflexivity}          & $\forall a b c . \neg \distinct(a, b, c) \implies \coll(a, b, c)$ \\
  \hline
p  \text{Perm-invariance}   & $\forall a b c . \coll(a, b, c) \implies \forall \pi . \permutation(\pi) \implies \coll(\pi(a), \pi(b), \pi(c))$ \\
  \hline
  \text{2-transitivity}           & $\forall a b c d . \coll(a, b, c) \wedge \coll(a, b, d) \wedge a \neq b \implies \coll(b, c, d)$ \\
  \hline
\end{tabular}
\end{center}

as does the quaternary relation $\cycl$ representing cocyclicity:

\begin{center}
\begin{tabular}{|l|l|}
  \hline
  Sub-reflexivity          & $\forall a b c d . \neg \distinct(a, b, c, d) \implies \cycl(a, b, c, d)$ \\
  \hline
  Perm-invariance   & $\forall a b c d . \cycl(a, b, c, d) \implies \forall \pi . \permutation(\pi) \implies \cycl(\pi(a), \pi(b), \pi(c), \pi(d))$ \\
  \hline
  3-transitivity           & $\forall a b c d e . \cycl(a, b, c, d) \wedge \cycl(a, b, c, e) \wedge \distinct(a, b, c) \implies \cycl(b, c, d, e)$ \\
  \hline
\end{tabular}
\end{center}

This pattern leads us to a natural generalization of equivalence relations that we call \emph{$k$-equivalence relations}.
\begin{definition}[$k$-Equivalence Relation] \label{defn:k-equiv}
We say a $(k+1)$-ary relation $R$ is a \emph{$k$-equivalence relation} provided it satisfies the following laws:

\begin{center}
\begin{tabular}{|l|l|}
  \hline
  Sub-reflexivity   & $\forall x_1 \dotsm x_{k+1} . \neg \distinct(\vec{x}) \implies R(\vec{x})$ \\
  \hline
  Perm-invariance   & $\forall x_1 \dotsm x_{k+1} \pi . R(\vec{x}) \wedge \permutation(\pi) \implies R(\pi(\vec{x}))$ \\
  \hline
  $k$-transitivity  & $\forall x_1 \dots x_{k} y_1 y_2 . R(\vec{x}, y_1) \wedge R(\vec{x}, y_2) \wedge \distinct(\vec{x}) \implies R(\vec{x}_{2:}, y_1, y_2)$ \\
  \hline
\end{tabular}
\end{center}
\end{definition}

\section{Basic Properties}

The key property of $k$-equivalence relations that we exploit in our algorithms is that they can be represented compactly in terms of finite sets.

\begin{definition}[$k$-predicate]\label{defn:k-predicate}
  Let \( R \) be a $k$-equivalence relation. Define its $k$-predicate $\Phi_R$ as follows: for any finite set \( X \),
  \[ \Phi_R(X) \coloneqq \bigwedge_{x \subseteq X, |x| = k+1} R(\vec{x}) \]
\end{definition}

\begin{definition}[$k$-predicate laws] \label{defn:k-predicate-laws}
The following laws follow from \Cref{defn:k-equiv} and \Cref{defn:k-predicate}:

\begin{center}
\begin{tabular}{|l|l|}
  \hline
  Sub-reflexivity   & $\forall x, |x| \leq k \implies \Phi_R(x)$ \\
  \hline
  $k$-transitivity  & $\forall x \forall y, \Phi_R(x) \wedge \Phi_R(y) \wedge |x \cap y| \geq k \implies \Phi(x \cup y)$ \\
  \hline
  Projection        & $\forall x y, \Phi_R(x) \wedge y \subseteq x \implies \Phi_R(y)$\\
  \hline
\end{tabular}
\end{center}

\end{definition}

\begin{definition}[$k$-function] A $k$-predicate $\Phi_R$ induces a \emph{$k$-function} $\phi_R$ on sets of size $k$ as follows:
  \[ \phi_R(x) \coloneqq \{ y : \Phi_R(x \cup y) \} \]
\end{definition}

\begin{example}
Note that $\phi_\coll(\{a, b\})$ is the set of points collinear with \( \{ a, b \} \), \ie the line through \( a \) and \( b \).
Similarly, $\phi_\cycl(\{a, b, c\})$ is the set of points cocyclic with \( \{ a, b, c \}\), \ie the circumcircle of \(a, b, c \).
\end{example}

\begin{lemma}
  Let $x_1, x_2$ be two sets of size $k$. Then
  \[ \phi_R(x_1) = \phi_R(x_2) \iff \Phi_R(x_1 \cup x_2) \]
\end{lemma}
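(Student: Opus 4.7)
The plan is to prove each direction separately, invoking the three $k$-predicate laws stated in \Cref{defn:k-predicate-laws}.

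For the forward direction, I will assume $\phi_R(x_1) = \phi_R(x_2)$ and derive $\Phi_R(x_1 \cup x_2)$. The key trick is to evaluate the set equality at the distinguished instance $y = x_2$. Since $|x_2| = k$, sub-reflexivity gives $\Phi_R(x_2)$, which is the same as $\Phi_R(x_2 \cup x_2)$; hence $x_2 \in \phi_R(x_2)$, and the assumed equality transports this to $x_2 \in \phi_R(x_1)$, which by the definition of $\phi_R$ unfolds to exactly $\Phi_R(x_1 \cup x_2)$.

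For the backward direction, I will assume $\Phi_R(x_1 \cup x_2)$ and show $\phi_R(x_1) \subseteq \phi_R(x_2)$; the reverse inclusion then follows by the symmetry of the hypothesis in $x_1, x_2$. Pick $y \in \phi_R(x_1)$, so $\Phi_R(x_1 \cup y)$ holds. The two sets $x_1 \cup x_2$ and $x_1 \cup y$ share the subset $x_1$ of size $k$, so $k$-transitivity yields $\Phi_R(x_1 \cup x_2 \cup y)$; projection onto $x_2 \cup y$ then gives $\Phi_R(x_2 \cup y)$, i.e., $y \in \phi_R(x_2)$.

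I do not expect any substantive obstacle: the entire content of the argument is picking the right instantiations. Setting $y = x_2$ collapses the forward direction via sub-reflexivity, and recognizing $x_1$ itself as the size-$k$ overlap that licenses $k$-transitivity drives the backward direction, after which projection finishes the job.
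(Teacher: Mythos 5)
Your proof is correct and follows essentially the same route as the paper's: the forward direction extracts $\Phi_R(x_1 \cup x_2)$ by noting $x_2 \in \phi_R(x_2)$ via sub-reflexivity and transporting along the assumed equality, and the backward direction uses $x_1$ as the size-$k$ overlap for $k$-transitivity followed by projection. Your explicit remark that the reverse inclusion follows by symmetry is a minor tidying of a step the paper leaves implicit.
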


\begin{proof}
  First suppose \( \{ y : \Phi_R(x_1 \cup y) \} = \{ y : \Phi_R(x_2 \cup y) \} \). Then $x_2$ is in the second set by subreflexivity so it must be in the first set,
  which yields $\Phi_R(x_1 \cup x_2)$.
  Now suppose \( \Phi_R(x_1 \cup x_2) \), and let $y$ be such that \( \Phi_R(x_1 \cup y) \). Since \( x_1 \cup y \) and \( x_1 \cup x_2 \) overlap at \( x_1 \) of size $k$,
  it follows by $k$-transitivity that \( \Phi_R(x_1 \cup x_2 \cup y ) \), and by projection that \( \Phi_R(x_2 \cup y) \).
\end{proof}

\section{Deciding $k$-Equivalence Relations}\label{sec:kdecide}

It is well known that a traditional equivalence relation defines a partition, which can be represented as a set of disjoint sets
and computed using \eg the \emph{union-find} algorithm~\cite{galler1964improved}. Similarly, a $k$-equivalence relation can be represented
as a set of sets whose pairwise intersections are less than $k$. When $k=1$, the sets must be disjoint, but this is not the case for higher values of $k$.
For example, a point $A$ in the plane may be collinear with two points $B_1$ and $C_1$, and also with two points $B_2$ and $C_2$, but this
does not imply that the five points are all collinear.

\subsection{Procedure}\label{sec:kdecide:subsec:procedure}

Fix a $k$-equivalence relation $R$, and consider a sequence $H_i$ of $R$-atoms over a set of \emph{distinct} terms $x_i$
(we relax the assumption of distinctness in \Cref{sec:kdecide:subsec:nondistinct}).
We are interested in determining, for a given $R$-atom $R(y_1, \dotsc, y_{k+1})$, whether or not
$\bigwedge_i H_i \implies R(y_1, \dotsc, y_{k+1})$.
The main idea of our procedure is to reason using the $k$-predicate $\Phi_R$, and to saturate with the rules from \Cref{defn:k-predicate-laws}
rather than those of \Cref{defn:k-equiv}.
Our procedure produces proofs using the following constructors:
\begin{enumerate}
\item \icode{assume(<hypothesis-idx>)}
\item \icode{subrefl(<terms>)}
\item \icode{trans(<pf1>, <pf2>)}
\item \icode{project(<pf>, <terms>)}
\end{enumerate}
where the latter three correspond to the laws of \Cref{defn:k-predicate-laws}.
Define a \emph{$k$-set} $s$ to be a set representing the fact $\Phi_R(s)$.
Our procedure maintains three datastructures:

\begin{enumerate}
\item \icode{ksets}: an array of $k$-sets.
  Note that we store the entire history of $k$-sets to facilitate proof production
  but only some $k$-sets are considered \emph{active}.
\item \icode{proofs}: an array of \emph{proofs}, one for each $k$-set.
\item \icode{term2parents}: a map from terms to the IDs of the \emph{active} ksets that contain it.
\end{enumerate}

\Cref{fig:proc:KDecide} shows pseudocode for the procedure. We first initialize
\icode{ksets},
\icode{proofs},
and \icode{term2parents} to empty (L\ref{proc:KDecide:init}).
We then iterate over the hypotheses in sequence (L\ref{proc:KDecide:for}).
For the $i$th hypothesis \icode{R(xs)}, we first create a new $k$-set for it
using \Call{New}{} (L\ref{proc:KDecide:call:new}), which involves appending new entries to \icode{ksets} and \icode{proofs}
and updating the \icode{term2parents} datastructure (L\ref{proc:KDecide:update-term2parents}).
At this point, the new $k$-set may overlap one or more existing $k$-sets by $k$ elements.
Thus, we need to detect these overlaps and merge $k$-sets as necessary.
We accomplish by calling \Call{FindMerges}{} on the newly created $k$-set (L\ref{proc:KDecide:call:findMerges}).
To find the necessary merges, we compute the $k$-sets that overlap at least $k$ with the newly created $k$-set
by first multiset-unioning the parents of the new $k$-set (L\ref{proc:KDecide:findMerges:union})
and then filtering the parents that occur atleast $k$ times (L\ref{proc:KDecide:findMerges:filter}).
If there are no such $k$-sets (besides the new one), there is nothing to do (L\ref{proc:KDecide:findMerges:check1}).
Otherwise we fold over the filtered $k$-sets, merging them (with \Call{Merge}{}) in sequence into one big $k$-set (L\ref{proc:KDecide:findMerges:merge}).
The procedure \Call{Merge}{} simply deactivates the old $k$-sets by removing them from \icode{term2parents} (L\ref{proc:KDecide:merge:deactivate1}-\ref{proc:KDecide:merge:deactivate2})
and then creates new $k$-set with the union of the two old $k$-sets (L\ref{proc:KDecide:merge:new}).
Note that once the matches have been merged into one big $k$-set, this new $k$-set may overlap atleast $k$ with another active $k$-set;
thus we must recursively find the merges of the new $k$-set (L\ref{proc:KDecide:findMerges:findMerges}).
Finally, to answer the query \icode{R(xs)}, we intersect the parents of the elements of \icode{xs} (L\ref{proc:KDecide:resolveQuery});
if the result is empty the query is not entailed (L\ref{proc:KDecide:matches-empty}), otherwise we call \Call{explain}{} to produce a proof of the query,
which we discuss in \Cref{sec:kdecide:subsec:proofs}. Note that we can easily extend \Call{KDecide}{} to support arbitrary-sized queries by first
checking if the query \icode{R(xs)} has size $\leq k$ and if so returning \icode{subrefl(xs)}.

\begin{figure}
\small
\begin{algorithmic}[1]
  \Procedure{KDecide}{\icode{hyps}, \icode{query}} \label{proc:KDecide}
  \State \icode{ksets}, \icode{proofs}, \icode{term2parents} $\gets$ \{\}, \{\}, \{\} \label{proc:KDecide:init}

  \For{i, $R$(\icode{xs}) in \icode{hyps}} \label{proc:KDecide:for}
    \State \icode{n} $\gets$  \Call{New}{\icode{xs}, \icode{assume(i)}}  \label{proc:KDecide:call:new}
    \State \Call{FindMerges}{\icode{n}} \label{proc:KDecide:call:findMerges}
  \EndFor

  \State $R$(\icode{xs}) $\gets$ \icode{query}
  \State \icode{matches} $\gets$ intersect parents of \icode{xs} \label{proc:KDecide:resolveQuery}
  \If{\icode{matches.isEmpty}} \Return \icode{not-entailed} \label{proc:KDecide:matches-empty}
  \Else \,  \Return \Call{explain}{\icode{matches[0]}, \icode{xs}} \label{proc:KDecide:explain}
  \EndIf
  \EndProcedure

  \Procedure{New}{\icode{xs}, \icode{proof}}
  \State \icode{n} $\gets$ \icode{ksets.size()}
  \State \icode{ksets.append(xs)}
  \State \icode{proofs.append(proof)}
  \For{\icode{x} in \icode{xs}}
  \icode{term2parent[x].insert(n)} \label{proc:KDecide:update-term2parents}
  \EndFor
  \State \Return \icode{n}
  \EndProcedure

  \Procedure{FindMerges}{\icode{n}}
  \State \icode{allParents} $\gets$ multiset union of the parents of \icode{ksets[n]} \label{proc:KDecide:findMerges:union}
  \State \icode{matches} $\gets$ subset of \icode{allParents} that appear at least $k$ times \label{proc:KDecide:findMerges:filter}
  \If{\icode{|matches| < 2}} \Return \EndIf \label{proc:KDecide:findMerges:check1}
  \State \icode{last} $\gets$ \icode{matches[0]}
  \For{$i$ = 1 to \icode{length}(\icode{matches})-1}
  \icode{last} $\gets$ \Call{Merge}{\icode{last}, \icode{matches[i]}}
  \EndFor \label{proc:KDecide:findMerges:merge}
  \State \Call{FindMerges}{\icode{last}} \label{proc:KDecide:findMerges:findMerges}
  \EndProcedure

  \Procedure{Merge}{\icode{i1}, \icode{i2}}
  \For{\icode{x} in \icode{ksets[i1]}}
  \icode{term2parent[x].remove(i1)}
  \EndFor \label{proc:KDecide:merge:deactivate1}
  \For{\icode{x} in \icode{ksets[i2]}}
  \icode{term2parent[x].remove(i2)}
  \EndFor \label{proc:KDecide:merge:deactivate2}
  \State \Return \Call{New}{\icode{union(ksets[i1], ksets[i2])}, \icode{trans(i1,i2)}}  \label{proc:KDecide:merge:new}
  \EndProcedure
\end{algorithmic}
 \caption{Deciding $k$-equivalence relations.}
 \label{fig:proc:KDecide}
 \end{figure}

 \subsection{Example}\label{sec:kdecide:subsec:example}

 We provide more intuition for our procedure by walking through the following small example ($k=2$):
 \[ R(a, b, c) \wedge R(c, d, e) \wedge R(e, f, g) \wedge R(a, d, g) \wedge R(b, c, d) \]
 The final state of the procedure is shown in \Cref{table:example:table}. None of the first four hypotheses trigger any merges.
 The fifth hypothesis (\icode{H4}) matches with both \icode{H0} and \icode{H1}, producing the $k$-set $\{a, b, c, d, e \}$.
 This $k$-set then (recursively) matches with \icode{H3} yielding $\{a, b, c, d, e, g \}$ which (recusively)
 matches with \icode{H2}, yielding a singleton $k$-set containing the union of all original hypotheses.
 \begin{table}
  \begin{center}
   \begin{tabular}{|l|l|l|l|}
   \hline
     \textbf{KSet Index} & \textbf{Active} & \textbf{Proof} & \textbf{Terms} \\ \hline
     0 & 0 & assume(H0) & a, b, c \\ \hline
     1 & 0 & assume(H1) & c, d, e \\ \hline
     2 & 0 & assume(H2) & e, f, g \\ \hline
     3 & 0 & assume(H3) & a, d, g \\ \hline
     4 & 0 & assume(H4) & b, c, d \\ \hline
     5 & 0 & trans(0, 4)    & a, b, c, d \\ \hline
     6 & 0 & trans(1, 5)    & a, b, c, d, e \\ \hline
     7 & 0 & trans(3, 6)    & a, b, c, d, e, g \\ \hline
     8 & 1 & trans(2, 7)    & a, b, c, d, e, f, g \\ \hline
   \end{tabular}
   \end{center}
   \caption{The state of the procedure after asserting the hypotheses in the example problem of
     \Cref{sec:kdecide:subsec:example} given by \( R(a, b, c) \wedge R(c, d, e) \wedge R(e, f, g) \wedge R(a, d, g) \wedge R(b, c, d) \).}
   \label{table:example:table}
\end{table}

 \subsection{Producing proofs} \label{sec:kdecide:subsec:proofs}

 Suppose for a given query $R(\vec{x})$, we find a $k$-set $y$ containing $\vec{x}$.
 We can easily produce a proof of $R(\vec{x})$ by simply projecting the proof of $\Phi_R(y)$ stored in the \icode{proofs} array
 with the \icode{project} proof constructor corresponding to the project rule in \Cref{defn:k-predicate-laws}.
 However, this proof may be extremely suboptimal in general.
 We take inspiration from~\cite{nieuwenhuis2005proof} and provide an \Call{Explain}{} operation that produces a proof of the query using a minimal subset of the hypotheses.

 \Cref{fig:proc:Explain} contains pseudocode for the \Call{Explain}{} procedure, which is called on the index $n$ of the $k$-set that contains the query,
 along with the terms in the query.  The procedure rests on two key insights. First, if we want to produce a proof of \icode{xs} from a $k$-set constructed from a merge,
 and if one of the merged $k$-sets \icode{s1} already contains \icode{xs}, then we can produce a proof directly from \icode{s1}
 without considering \icode{s2} (L\ref{proc:explain:insight1a}-\ref{proc:explain:insight1b}).
 Second, even if \icode{xs} is not contained in either \icode{s1} or \icode{s2},
 we still do not need to produce proofs of \icode{s1} and \icode{s2} in their entirety;
 if we can produce child proofs of \icode{union(inter(s1, s2), inter(s1, xs))} and \icode{union(inter(s1, s2), inter(s2, xs))} respectively,
 we can glue them together with $k$-transitivity to produce a proof of a set containing \icode{xs}, from which we can project a proof of \icode{xs}.

\begin{figure}
\small
\begin{algorithmic}[1]
  \Procedure{Explain}{\icode{n}, \icode{xs}} \label{proc:Explain}
  \If{\icode{histories[n] = assume(i)}}
  \Return \icode{assume(i)}
  \ElsIf{\icode{histories[n] = merge(i1, i2)}}
  \If{\icode{xs} $\subseteq$ \icode{ksets[i1]}} \Return \Call{Explain}{\icode{i1}, \icode{xs}} \label{proc:explain:insight1a}
  \ElsIf{\icode{xs} $\subseteq$ \icode{ksets[i2]}} \Return \Call{Explain}{\icode{i2}, \icode{xs}} \label{proc:explain:insight1b}
  \Else
  \State \icode{anchor} $\gets$ \icode{ksets[i1]} $\cap$ \icode{ksets[i2]}
  \State \icode{pf1} $\gets$ \Call{Explain}{i1, \icode{anchor} $\cup$ (\icode{ksets[i1]} $\cap$ \icode{xs})}
  \State \icode{pf2} $\gets$ \Call{Explain}{i2, \icode{anchor} $\cup$ (\icode{ksets[i2]} $\cap$ \icode{xs})}
  \State \Return \icode{project(trans(pf1, pf2), xs)}
  \EndIf
  \EndIf
  \EndProcedure
\end{algorithmic}
 \caption{Producing compact proofs.}
 \label{fig:proc:Explain}
\end{figure}

\paragraph{Example.} Suppose that after asserting the hypotheses of \Cref{sec:kdecide:subsec:example} we queried \Call{Explain}{8, \icode{\{a, b, d\}}}.
\Call{Explain}{} will take the branch in L\ref{proc:explain:insight1b} three consecutive times and finally return the
minimal proof \icode{project(trans(assume(H0), assume(H4)), \{a, b, d\})}.

\subsection{Terms that may not be distinct} \label{sec:kdecide:subsec:nondistinct}

We now show how to lift the simplifying assumption of \ref{sec:kdecide:subsec:procedure} that each of the terms that
appeared in the $k$-equivalence relation are known to be distinct.
Suppose that rather than assuming that every term is distinct, we assume that we have a partition $\sigma$ on terms such that two terms
in different equivalence classes $\gamma$ are known to be distinct.
In geometry solvers, it is common to use numericals diagrams to determine acceptable
distinctness conditions to assume, in which case such a partition can be determined by rounding the coordinates of each (possibly equal) point to a given precision.
To support this scenario, it suffices to tweak L\ref{proc:KDecide:findMerges:union}
by first \emph{set}-unioning the parents \emph{within} each equivalence class before \emph{multiset}-unioning the results.
This will compute the set of $k$-sets that overlap with the new $k$-set at atleast $k$ points that are known to be distinct.

\subsection{Asymptotics} \label{sec:kdecide:subsec:asymptotics}

Let $n$ be the number of hypotheses and $m$ the maximum number of parents of any term at any time during the procedure.

\begin{lemma}
There can only ever be as many as $n$ active $k$-sets.
\end{lemma}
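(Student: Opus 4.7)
The plan is to prove this by a simple invariant argument: after processing the first $i$ hypotheses in the outer loop of \Call{KDecide}{}, the number of active $k$-sets is at most $i$. Setting $i = n$ at the end yields the claim. Since the procedure only touches the set of active $k$-sets during the main \textbf{for} loop, any intermediate peak between hypotheses $i-1$ and $i$ also stays bounded by $i$, hence by $n$.

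The key accounting observation is that the three operations that modify the active set have predictable effects on its cardinality. The procedure \Call{New}{} adds exactly one new active $k$-set to \icode{term2parents}, increasing the count by $1$. The procedure \Call{Merge}{} first removes both \icode{i1} and \icode{i2} from \icode{term2parents} and then invokes \Call{New}{} on their union, for a net change of $-1$. No other code path toggles activity. Thus each top-level iteration of the loop performs exactly one \Call{New}{} for the hypothesis, followed by a (possibly recursive) chain of \Call{FindMerges}{} calls, each of which performs some number $t \geq 0$ of \Call{Merge}{} operations; the net change to the active count during the iteration is $1 - t \leq 1$.

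From here the induction is straightforward. Base case: before the loop starts, \icode{term2parents} is empty, so there are $0 \leq 0$ active $k$-sets. Inductive step: assuming at most $i-1$ active $k$-sets before iteration $i$, the analysis above gives at most $(i-1) + 1 = i$ active $k$-sets afterward. The peak during iteration $i$ occurs right after the initial \Call{New}{} call at L\ref{proc:KDecide:call:new}, where the count is $(i-1) + 1 = i$; every subsequent \Call{Merge}{} inside the nested \Call{FindMerges}{} calls only decreases the count. Therefore the active count never exceeds $n$, and in particular it is at most $n$ at termination.

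I do not anticipate a real obstacle here; the only point requiring minor care is that \Call{FindMerges}{} is recursive, so one must observe that each recursive invocation still consists solely of \Call{Merge}{} operations (which are net-negative) and possibly a further recursive call, so induction on the recursion depth (or equivalently on the total number of merges performed) preserves the bound.
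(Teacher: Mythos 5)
Your proof is correct and rests on the same accounting observation as the paper's: only the per-hypothesis call to \Call{New}{} increases the active count, while every \Call{Merge}{} deactivates two $k$-sets and activates one for a net change of $-1$. The explicit induction on loop iterations and recursion depth is just a more formal packaging of the paper's one-line argument.
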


\begin{proof}
  Only processing a new hypothesis increases the number of active $k$-sets;
  all other calls to \Call{New}{} are during merges,
  which first deactivate two $k$-sets and so decrease the number of active $k$-sets by one.
\end{proof}

\begin{lemma}
\( m = O(n) \)
\end{lemma}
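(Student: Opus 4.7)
The plan is to observe that this bound is essentially a direct corollary of the preceding lemma, combined with the definition of \icode{term2parents}. Recall that \icode{term2parents[x]} stores only the indices of the \emph{active} $k$-sets that contain the term \icode{x}; the inactive (historical) $k$-sets are retained in the \icode{ksets} array solely for proof production, and they are explicitly removed from \icode{term2parents} inside \Call{Merge}{} at L\ref{proc:KDecide:merge:deactivate1}--\ref{proc:KDecide:merge:deactivate2}. Consequently, for any term $x$ at any point in the execution, the parents of $x$ form a subset of the currently active $k$-sets.

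First I would make this inclusion explicit, by tracing through the three places where \icode{term2parents} is mutated: the loop inside \Call{New}{} (which inserts the newly created, and therefore active, $k$-set), and the two deactivation loops in \Call{Merge}{} (which remove the $k$-sets being retired). Together these maintain the invariant that $n \in \text{\icode{term2parents}}[x]$ if and only if $x \in \text{\icode{ksets}}[n]$ and \icode{ksets[n]} is active.

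Then I would invoke the previous lemma, which bounds the number of simultaneously active $k$-sets by $n$, to conclude that $|\text{\icode{term2parents}}[x]| \leq n$ at every step, for every term $x$. Taking the maximum over terms and over time gives $m \leq n$, hence $m = O(n)$. There is no real obstacle here: the substance of the bound was already established in the previous lemma, and this statement simply lifts it from ``number of active $k$-sets'' to ``number of parents of a term.''
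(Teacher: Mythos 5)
Your argument is correct and is clearly the intended one: the paper states this lemma without proof, immediately after the bound of $n$ on the number of active $k$-sets, and your observation that \icode{term2parents}$[x]$ is by construction a subset of the active $k$-sets (maintained by the insertions in \Call{New}{} and the removals in \Call{Merge}{}) gives $m \leq n$ directly. Nothing is missing.
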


\begin{lemma}
  There can be at most $n-1 = O(n)$ merges.
\end{lemma}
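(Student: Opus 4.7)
The plan is to bound the number of merges by tracking the number of active $k$-sets as an invariant. Let $A$ denote the current number of active $k$-sets, and let $M$ denote the number of merges performed so far. I would first establish a bookkeeping identity: since \Call{New}{} is only ever invoked either from processing a hypothesis (adding one active $k$-set) or from within \Call{Merge}{} (which first deactivates two $k$-sets and then creates one, for a net change of $-1$), after processing all $n$ hypotheses we have exactly $A = n - M$.

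Next, I would observe the key inequality $A \geq 1$ after the first hypothesis has been processed. This holds because \Call{FindMerges}{} triggers \Call{Merge}{} only when its \icode{matches} set has size at least $2$, which requires at least two active $k$-sets to be present; therefore a merge cannot reduce $A$ from $1$ down to $0$. This invariant is preserved through the recursive call to \Call{FindMerges}{} on the freshly-created $k$-set, since that call again checks the size of \icode{matches} before invoking \Call{Merge}{}.

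Combining the two observations, at the end of the procedure we have $n - M = A \geq 1$, which rearranges to $M \leq n - 1 = O(n)$, as required. The only slightly delicate step is verifying that the count-decreases-by-one accounting still holds correctly when merges cascade recursively inside a single hypothesis's \Call{FindMerges}{} invocation, but this follows immediately by induction on the recursion depth since each individual \Call{Merge}{} call contributes exactly one deactivated pair and one new activation. I do not anticipate any genuine obstacle; the statement is essentially a potential-function argument, with active $k$-sets serving as the potential.
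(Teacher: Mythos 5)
Your argument is correct and matches the approach the paper intends: the paper states this lemma without an explicit proof, but its proof of the preceding lemma (at most $n$ active $k$-sets) uses exactly your accounting that only hypotheses increase the active count while each merge decreases it by one, giving $A = n - M$. Your additional observation that $A \geq 1$ (since a merge requires at least two active matches) is the small extra step needed to conclude $M \leq n-1$, and it is sound.
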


\begin{lemma}
  \Call{FindMerges}{} is called at most $2n = O(n)$ times.
\end{lemma}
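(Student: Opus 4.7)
The plan is to count the two sources of calls to \Call{FindMerges}{} separately: the direct calls from the main loop of \Call{KDecide}{}, and the recursive calls made at L\ref{proc:KDecide:findMerges:findMerges}. The direct calls give exactly $n$, one per hypothesis, so it suffices to bound the recursive calls by $n$ (in fact, $n-1$) and invoke the previous lemma.

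The key observation is that a recursive call at L\ref{proc:KDecide:findMerges:findMerges} only occurs when the early-return check at L\ref{proc:KDecide:findMerges:check1} is not triggered, i.e.\ when $|\icode{matches}| \geq 2$. In that case the fold at L\ref{proc:KDecide:findMerges:merge} performs $|\icode{matches}| - 1 \geq 1$ merges before the recursive call is issued. Hence I can charge each recursive call to at least one distinct merge performed in its own invocation of \Call{FindMerges}{}; distinct invocations charge to disjoint sets of merges.

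By the previous lemma there are at most $n - 1$ merges in total, so the number of recursive calls is at most $n - 1$. Adding the $n$ direct calls from the main loop yields at most $2n - 1 \leq 2n$ calls to \Call{FindMerges}{} overall, as claimed.

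I do not expect any real obstacle here; the only subtlety is making sure the charging scheme is sound, i.e.\ that each recursive call is associated with merges that are not also credited to another recursive call. This is immediate from the control flow, since the fold that produces the merges and the subsequent recursive call both belong to the same invocation frame of \Call{FindMerges}{}.
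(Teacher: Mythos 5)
Your argument is correct and is evidently the intended one: the paper states this lemma without proof, and the $2n$ bound plainly decomposes as the $n$ direct calls from the main loop plus the recursive calls, each of which you soundly charge to at least one of the at most $n-1$ merges performed in its own frame. Nothing is missing.
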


\begin{lemma}
The largest size of any $k$-set is $k+n$.
\end{lemma}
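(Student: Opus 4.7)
\medskip
\noindent\textbf{Proof plan.} Every $k$-set is created either by the initial call to \Call{New}{} when a hypothesis is asserted, or by a call inside \Call{Merge}{}. Hypothesis $k$-sets have size exactly $k+1$. Each \Call{Merge}{} call produces the union of two existing $k$-sets whose intersection has size at least $k$, since otherwise the resulting \icode{trans(i1,i2)} proof term would not be a valid invocation of the $k$-transitivity rule from \Cref{defn:k-predicate-laws}. Thus each $k$-set $s$ can be viewed as the root of a binary merge-tree whose leaves are some collection of the original hypothesis $k$-sets; let $w(s)$ denote the number of such leaves, so $w(s)=1$ when $s$ comes directly from a hypothesis, and $w(s) = w(s_1) + w(s_2)$ when $s$ is the merge of $s_1$ and $s_2$.

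I will prove by induction on $w(s)$ that $|s| \le w(s) + k$. In the base case $w(s)=1$, the set $s$ is a hypothesis $k$-set of size $k+1 = 1 + k$. For the inductive step, suppose $s$ arises from merging $s_1$ and $s_2$. By the overlap invariant above, $|s_1 \cap s_2| \ge k$, so
\[
|s| \;=\; |s_1 \cup s_2| \;=\; |s_1| + |s_2| - |s_1 \cap s_2| \;\le\; |s_1| + |s_2| - k.
\]
Applying the inductive hypothesis to $s_1$ and $s_2$ yields $|s| \le (w(s_1)+k) + (w(s_2)+k) - k = w(s_1)+w(s_2)+k = w(s)+k$, as required.

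Finally, since distinct leaves of the merge tree correspond to distinct hypothesis $k$-sets, and there are only $n$ hypotheses in total, we have $w(s) \le n$ for every $k$-set $s$, and hence $|s| \le n+k$. The only non-routine point is the overlap invariant used in the inductive step, but this is forced by the soundness of the \icode{trans} proof constructor, so once that is observed explicitly the rest is a straightforward telescoping computation.
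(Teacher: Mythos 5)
Your proof is correct and takes essentially the same approach as the paper: both arguments rest on the fact that every merge unions two $k$-sets overlapping in at least $k$ elements, so each additional hypothesis contributes at most one new element beyond the initial $k+1$. You simply formalize the paper's one-line counting argument as an explicit induction on the merge tree, which makes the bookkeeping (and the reliance on the overlap invariant) more transparent.
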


\begin{proof}
  Any $k$-set must be the union of some $n_0$-element subset of the set of the $n$ original $k+1$-element $k$-sets corresponding to the hypotheses.
  For these $n_0$ $k$-sets to be merged into one, they all must overlap $k$ with their union; thus the union can have at most $n_0+k$ elements.
  The result follows from $n_0 \leq n$.
\end{proof}

\begin{lemma}
  The cumulative execution of time of L\ref{proc:KDecide:findMerges:union} is at most $2n(k+n)m = O(knm + n^2m)$ time.
\end{lemma}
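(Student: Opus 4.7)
The plan is simply to bound a single execution of L\ref{proc:KDecide:findMerges:union} and multiply by the number of times \Call{FindMerges}{} is invoked. A single execution on a $k$-set $s$ walks over each $x \in s$ and collects its list of active parents, contributing $O(|s| \cdot m)$ work, where the bound $m$ on the number of parents of any term applies because $m$ was defined as the maximum over the entire run of the procedure.

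Next, I would appeal to the two prior lemmas in this subsection: the lemma that the largest $k$-set has size at most $k+n$ bounds $|s| \leq k+n$, and the lemma that \Call{FindMerges}{} is called at most $2n$ times bounds the number of executions. Multiplying gives the cumulative bound $2n(k+n)m$, and distributing yields the asymptotic form $O(knm + n^2m)$.

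There is no serious obstacle here; the only thing to be a bit careful about is that the parent-count bound $m$ and the $k$-set-size bound $k+n$ are both instantaneous quantities that must hold at the moment any particular invocation of L\ref{proc:KDecide:findMerges:union} occurs, but since both lemmas are stated as worst-case bounds over the entire execution, substituting them is immediate.
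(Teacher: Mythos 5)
Your proposal is correct and matches the decomposition the paper's lemma clearly intends: the constant $2n(k+n)m$ factors exactly as (number of \textsc{FindMerges} calls) $\times$ (maximum $k$-set size) $\times$ (maximum parents per term), each supplied by one of the preceding lemmas. The paper gives no explicit proof for this lemma, but your argument is the evident one and is sound, including your care about $m$ and $k+n$ being worst-case bounds over the whole execution.
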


\begin{lemma}
  The cumulative execution time of L\ref{proc:KDecide:update-term2parents}, L\ref{proc:KDecide:merge:deactivate1}, L\ref{proc:KDecide:merge:deactivate2}, L\ref{proc:KDecide:merge:new}
  are all bounded by \( O(n(k+n)) \).
\end{lemma}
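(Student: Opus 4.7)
The plan is a straightforward counting argument that multiplies the number of invocations of each line by an upper bound on the work done per invocation, using the earlier lemmas in this subsection.

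First I would count invocations. By the preceding lemma, there are at most \(n-1 = O(n)\) merges, so \Call{Merge}{} is invoked \(O(n)\) times, which immediately bounds the number of executions of L\ref{proc:KDecide:merge:deactivate1}, L\ref{proc:KDecide:merge:deactivate2}, and L\ref{proc:KDecide:merge:new} (treating L\ref{proc:KDecide:merge:new} as a single call to \Call{New}{}). The procedure \Call{New}{} is called once per hypothesis (\(n\) times from the main loop) and once per merge (at most \(n-1\) times from L\ref{proc:KDecide:merge:new}), for a total of \(O(n)\) invocations, which bounds the number of executions of the loop containing L\ref{proc:KDecide:update-term2parents}.

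Next I would bound the per-invocation cost. Each execution of L\ref{proc:KDecide:update-term2parents} iterates over \icode{xs}, whose size is at most \(k+n\) by the lemma bounding the largest $k$-set size. Likewise, each execution of L\ref{proc:KDecide:merge:deactivate1} and L\ref{proc:KDecide:merge:deactivate2} iterates over \icode{ksets[i1]} and \icode{ksets[i2]}, each of size at most \(k+n\). For L\ref{proc:KDecide:merge:new}, the union operation produces a set of size at most \(k+n\) and the subsequent call to \Call{New}{} pays work proportional to that size (the \icode{append} operations are $O(1)$ amortized and L\ref{proc:KDecide:update-term2parents} is accounted for in the bound above for \Call{New}{}).

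Multiplying \(O(n)\) invocations by \(O(k+n)\) work per invocation yields the claimed \(O(n(k+n))\) bound for each of the four lines. There is no real obstacle here; the only minor subtlety is being careful not to double-count the work of L\ref{proc:KDecide:update-term2parents} when it is executed as part of the \Call{New}{} call in L\ref{proc:KDecide:merge:new}, but since both bounds have the same asymptotic form this causes no issue.
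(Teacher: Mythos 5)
Your proposal is correct and is exactly the argument the paper leaves implicit (this lemma is stated without proof): $O(n)$ invocations of each line, from $n$ hypotheses plus at most $n-1$ merges, times $O(k+n)$ work per invocation via the bound on the largest $k$-set. The accounting is sound, including the observation that the initial calls to \textsc{New} only touch sets of size $k+1 \leq k+n$, so nothing further is needed.
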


\begin{theorem}[Worst-Case Upper Bound]
  The worst-case running time of \Call{KDecide}{\icode{hyps}, \icode{query}}
  is $O \left( kmn + n^2m \right) $. Since $m = O(n)$, this simplifies to $O\left( kn^2 + n^3 \right)$.
\end{theorem}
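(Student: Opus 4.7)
The plan is to sum the cumulative cost of each line of \Call{KDecide}{} using the preceding lemmas as the main inputs. The bulk of the work is already accounted for: one lemma gives $O(knm + n^2m)$ for the cumulative cost of the multiset union on L\ref{proc:KDecide:findMerges:union}, and another gives $O(n(k+n))$ for the cumulative cost of L\ref{proc:KDecide:update-term2parents}, L\ref{proc:KDecide:merge:deactivate1}, L\ref{proc:KDecide:merge:deactivate2}, and L\ref{proc:KDecide:merge:new}. It remains to bound the residual lines and argue that they are all absorbed by the multiset-union contribution.

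First I would bound the filter step on L\ref{proc:KDecide:findMerges:filter}: each invocation operates on a multiset of size at most $(k+n)m$ (by the largest-$k$-set lemma and the definition of $m$), and since \Call{FindMerges}{} is invoked at most $2n$ times, the cumulative cost is $O(n(k+n)m) = O(knm + n^2m)$, matching the dominant bound. The merge-fold on L\ref{proc:KDecide:findMerges:merge} does $O(1)$ work per merge step beyond the call to \Call{Merge}{}, which is already counted; since there are at most $n-1$ merges in total, this contributes $O(n)$. The final query intersection on L\ref{proc:KDecide:resolveQuery} costs $O(km)$ as a one-off.

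Next I would bound \Call{Explain}{}, which is the only component not already covered by a previous lemma. Each call does $O(k+n)$ work to compute the anchor and the two intersections, and recurses on at most two child indices whose merge-histories are strictly smaller than the current one; since there are at most $n-1$ merges total, the entire recursion tree has $O(n)$ nodes and the cumulative cost is $O(n(k+n)) = O(kn + n^2)$, which is absorbed by $O(kmn + n^2m)$ since $m \geq 1$.

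Summing all contributions gives $O(knm + n^2m)$, and substituting $m = O(n)$ yields $O(kn^2 + n^3)$. The main obstacle is really just careful accounting --- ensuring that \emph{every} line in \Cref{fig:proc:KDecide} and \Cref{fig:proc:Explain} is charged, and that the \Call{Explain}{} bound (the only piece not handed to us as a lemma) does not blow up the total. Since each recursive \Call{Explain}{} call descends to an index with a strictly shorter merge-history, the recursion tree is a sub-tree of the merge DAG and is therefore linear in $n$, which is what makes the bookkeeping close.
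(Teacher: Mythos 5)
Your proposal is correct and follows the same route the paper intends: the theorem is stated without an explicit proof, as the sum of the preceding lemmas, and you supply exactly that accounting together with the residual lines (the filter, the fold, the query intersection, and \Call{Explain}{}) that the lemmas leave uncharged. The one point worth tightening is the \Call{Explain}{} bound: the recursion tree is linear in $n$ not merely because each recursive call descends to a strictly earlier index, but because each $k$-set is deactivated when merged and hence appears as the child of at most one later merge, so the merge structure is a forest with at most $2n-1$ nodes and no node can be visited twice.
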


\begin{corollary}
  The worst-case running time is linear in $k$, whereas the na\"{i}ve implementation is exponential in $k$.
\end{corollary}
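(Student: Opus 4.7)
The plan is to aggregate the cumulative bounds already established in the preceding lemmas and verify that no remaining operation dominates them. I would decompose the work performed by \Call{KDecide}{} into four parts: (i) the processing of each hypothesis in the main loop at L\ref{proc:KDecide:for}, which triggers a call to \Call{New}{} and then to \Call{FindMerges}{}; (ii) the recursive merging and bookkeeping triggered within \Call{FindMerges}{} and \Call{Merge}{}; (iii) the query resolution at L\ref{proc:KDecide:resolveQuery}; and (iv) the call to \Call{Explain}{} at L\ref{proc:KDecide:explain}.

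For parts (i) and (ii), I would invoke the preceding lemmas directly. The lemma bounding the cumulative cost of L\ref{proc:KDecide:findMerges:union} already gives $O(knm + n^2m)$ for the dominant multiset-union step, and the lemma bounding L\ref{proc:KDecide:update-term2parents}, L\ref{proc:KDecide:merge:deactivate1}, L\ref{proc:KDecide:merge:deactivate2}, and L\ref{proc:KDecide:merge:new} gives $O(n(k+n))$ for all bookkeeping on \icode{term2parents} and all allocations of new $k$-sets. Since $m \geq 1$, the second bound is absorbed into the first. The only other line inside \Call{FindMerges}{} is the filtering step at L\ref{proc:KDecide:findMerges:filter}, whose cost is at most proportional to the size of the multiset union from the previous line and hence is already accounted for.

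For parts (iii) and (iv), I would argue directly. The query resolution intersects the parent lists of the $k+1$ query terms, each of size at most $m$, which costs $O(km)$. For \Call{Explain}{}, I would note that the recursion tree has at most $O(n)$ nodes since each internal node consumes a distinct $k$-set produced during the main loop, and each node performs a constant number of set operations on $k$-sets of size at most $k+n$, giving a total of $O(n(k+n))$. Both of these are dominated by $O(kmn + n^2m)$.

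Summing the four contributions yields the claimed bound $O(kmn + n^2m)$, and substituting $m = O(n)$ from the earlier lemma produces the simplification $O(kn^2 + n^3)$. I do not anticipate a real technical obstacle here: the substantive amortized analysis has already been packaged into the preceding lemmas, so the remaining step is the essentially routine verification that every auxiliary cost -- filtering, query resolution, explanation -- is subsumed by the already-established bound on the multiset-union step.
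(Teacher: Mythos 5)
There is a gap: your argument addresses only half of the corollary. Everything you write is a (re-)derivation of the preceding theorem's bound $O(kmn + n^2m) = O(kn^2 + n^3)$; from the corollary's point of view, all that is needed from that side is the one-line observation that this expression is linear in $k$, which the theorem already supplies. What your proposal never touches is the second claim, that the na\"{i}ve implementation is exponential in $k$. That is a comparative statement and requires its own (short) argument: the na\"{i}ve procedure saturates directly with the rules of Definition~\ref{defn:k-equiv}, materializing every entailed $(k+1)$-ary atom $R(\vec{x})$. A single collection of $n_0$ terms that the hypotheses force into one $k$-set corresponds to $\binom{n_0}{k+1}$ distinct atoms (up to permutation), and for $n_0 \geq 2(k+1)$ one has $\binom{n_0}{k+1} \geq 2^{k+1}$, so the na\"{i}ve procedure's space and time are exponential in $k$ on such instances, whereas \Call{KDecide}{} represents the same information as one set of size at most $k+n$. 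Without some version of this lower-bound observation, the corollary's comparison is unsupported.

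A secondary, more minor point: your accounting for \Call{Explain}{} asserts that its recursion tree has $O(n)$ nodes because ``each internal node consumes a distinct $k$-set,'' but the recursion branches into two calls on \icode{i1} and \icode{i2} with \emph{different} target sets, so a priori the same $k$-set index could be visited more than once along different branches; the $O(n)$ node count needs the observation that the indices strictly decrease and each merge node has a unique pair of children in the DAG of $k$-sets (or simply a bound of one visit per ancestor in that DAG). This does not affect the final bound, since \Call{Explain}{} is comfortably dominated by the saturation cost, but as stated the step is not justified.
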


\subsection{Integrating with congruence closure}\label{sec:kdecide:subsec:kcc}

It is relatively straightforward to integrate the decision procedure of \Cref{sec:kdecide:subsec:procedure} into a congruence closure procedure.
Suppose we have a sequence of equalities and $R$-atoms for various $k$-equivalence relations, and for simplicity assume that
no equalities among $k$-functions are provided explicitly.
For a given $k$-equivalence relation, we can represent the entailed $R$-atoms compactly using $k$-sets, where each $k$-set also stores
its canonical $k$-function application. Whenever two equivalence classes are merged, in addition to the standard congruence closure bookkeeping, we traverse all $k$-sets that include
any member of the smaller class. For each one, we replace all terms with their new representatives and call \Call{New}{} on the result.
Whenever two $k$-sets are merged, we also merge the $E$-classes of their canonical $k$-function applications.

\section{Related Work}\label{sec:relatedwork}

In \cite{chou2000deductive}, both lines and circles are represented using lists of points,
and the corresponding permutation and transitivity rules are claimed to be built-in to the solver;
however, few details are provided about how these rules are propagated.
Equivalence relations have previously been generalized to \emph{ternary equivalence relations}~\cite{rainich1952ternary},
which include collinearity but does not generalize to cocyclicity.
The concept of \emph{$E$-sequences}~\cite{szmielew1981n} is related to our notion of $k$-equivalence relation as follows:
a $(k+1)$-ary relation $R$ is a $k$-equivalence relation if and only if the sequence of relations \( ( \mathrm{notDistinct}_2, \dotsc, \mathrm{notDistinct_k}, R ) \) forms an $E$-sequence.

\section*{Acknowledgments}

We thank Nikolaj Bj{\o}rner for detailed discussions.


\newpage
\bibliography{ksets}

\end{document}